\documentclass[12pt]{article}      
\usepackage{amsmath,amssymb,amsthm, amscd, graphicx, verbatim, setspace} 
\usepackage{setspace}

\theoremstyle{plain}
\newtheorem{thm}{Theorem}
\newtheorem{lem}[thm]{Lemma}
\newtheorem{cor}[thm]{Corollary}

\DeclareMathOperator{\ex}{ex}
\DeclareMathOperator{\Ex}{Ex}

\title{Bounds on parameters of minimally non-linear patterns}
\date{}
\author{P.A. CrowdMath\\
\small\tt crowdmath@artofproblemsolving.com}
\begin{document}
\maketitle

\begin{abstract}
Let $\ex(n, P)$ be the maximum possible number of ones in any 0-1 matrix of dimensions $n \times n$ that avoids $P$. Matrix $P$ is called minimally non-linear if $\ex(n, P) = \omega(n)$ but $\ex(n, P') = O(n)$ for every strict subpattern $P'$ of $P$. We prove that the ratio between the length and width of any minimally non-linear 0-1 matrix is at most $4$, and that a minimally non-linear 0-1 matrix with $k$ rows has at most $5k-3$ ones. We also obtain an upper bound on the number of minimally non-linear 0-1 matrices with $k$ rows. 

In addition, we prove corresponding bounds for minimally non-linear ordered graphs. The minimal non-linearity that we investigate for ordered graphs is for the extremal function $\ex_{<}(n, G)$, which is the maximum possible number of edges in any ordered graph on $n$ vertices with no ordered subgraph isomorphic to $G$.
\end{abstract}

\section{Introduction}
A 0-1 matrix $M$ \emph{contains} a 0-1 matrix $P$ if some submatrix of $M$ either equals $P$ or can be turned into $P$ by changing some ones to zeroes. Otherwise $M$ \emph{avoids} $P$. The function $\ex(n, P)$ is the maximum number of ones in any 0-1 matrix of dimensions $n \times n$ that avoids $P$.

The function $\ex(n, P)$ has been used for many applications, including resolving the Stanley-Wilf conjecture \cite{MT} and bounding the maximum number of unit distances in a convex $n$-gon \cite{Fure}, the complexity of algorithms for minimizing rectilinear path distance while avoiding obstacles \cite{Mit}, the maximum number of edges in ordered graphs on $n$ vertices avoiding fixed ordered subgraphs \cite{KM, PT, weid}, and the maximum lengths of sequences that avoid certain subsequences \cite{Pet}.

It is easy to see that $\ex(n, P) = \ex(n, P')$ if $P'$ is obtained from $P$ by reflections over horizontal or vertical lines or ninety-degree rotations. It is also obvious that if $P'$ contains $P$, then $\ex(n, P) \leq \ex(n, P')$.

If $P$ has at least two entries and at least one $1$-entry, then $\ex(n, P) \geq n$ since there exists a matrix with ones only in a single column or a single row that avoids $P$. For example, $\ex(n, \begin{bmatrix}
1 & 1
\end{bmatrix}) = n$ since the $n \times n$ matrix with ones only in the first column and zeroes elsewhere avoids $\begin{bmatrix}
1 & 1
\end{bmatrix}$ and every 0-1 matrix of dimensions $n \times n$ with $n+1$ ones has a row with at least two ones. It is also easy to see that $\ex(n, P) = (k-1)n$ when $P$ is a $1 \times k$ matrix with all ones: the $n \times n$ matrix with ones only in the first $k-1$ columns and zeroes elsewhere avoids $P$, while every 0-1 matrix with dimensions $n \times n$ and $(k-1)n+1$ ones has a row with at least $k$ ones.

Since the 0-1 matrix extremal function has a linear lower bound for all 0-1 matrices except those with all zeroes or just one entry, it is natural to ask which 0-1 matrices have linear upper bounds on their extremal functions. F\"{u}redi and Hajnal posed the problem of finding all 0-1 matrices $P$ such that $\ex(n, P) = O(n)$ \cite{FH}. Their problem has only been partially answered. 

Marcus and Tardos proved that $\ex(n, P) = O(n)$ for every permutation matrix $P$ \cite{MT}. This linear bound was extended in \cite{G} to tuple permutation matrices, which are obtained by replacing every column of a permutation matrix with multiple adjacent copies of itself. 

Keszegh \cite{Ke}, Tardos \cite{gtar} and F\"{u}redi and Hajnal \cite{FH} found multiple operations that can be used to construct new linear 0-1 matrices (matrices $P$ for which $\ex(n, P) = O(n)$) from known linear 0-1 matrices. No one has found a way to determine whether an arbitrary  0-1 matrix is linear just by looking at it. However, one approach that might eventually resolve the F\"{u}redi-Hajnal problem is to identify all minimally non-linear 0-1 matrices.

A 0-1 matrix $P$ is called \emph{minimally non-linear} if $\ex(n, P) = \omega(n)$ but $\ex(n, P') = O(n)$ for every $P'$ that is strictly contained in $P$. If $M$ contains a minimally non-linear 0-1 matrix, then $\ex(n, M)$ is non-linear. If $M$ avoids all minimally non-linear 0-1 matrices, then $\ex(n, M)$ is linear. Thus identifying all minimally non-linear 0-1 matrices is equivalent to solving F\"{u}redi and Hajnal's problem.

Keszegh \cite{Ke} constructed a class $H_{k}$ of 0-1 matrices for which $\ex(n, H_{k}) = \Theta(n \log{n})$ and conjectured the existence of infinitely many minimally non-linear 0-1 matrices contained in the class. This conjecture was confirmed in \cite{G}, without actually constructing an infinite family of minimally non-linear 0-1 matrices.

There are only seven minimally non-linear 0-1 matrices with $2$ rows. These matrices include $\begin{bmatrix}
1 & 1\\
1 & 1
\end{bmatrix}, \begin{bmatrix}
1 & 0 & 1 \\
0 & 1 & 1
\end{bmatrix}, \begin{bmatrix}
0 & 1 & 1 \\
1 & 0 & 1
\end{bmatrix}, \begin{bmatrix}
1 & 0 & 1 & 0 \\
0 & 1 & 0 & 1
\end{bmatrix}$, and reflections of the last three over a vertical line.

In this paper, we bound the number of minimally non-linear 0-1 matrices with $k$ rows for $k > 2$. In order to obtain upper bounds for this number, we bound the ratio between the length and width of a minimally non-linear 0-1 matrix. We also investigate similar problems for sequences and ordered graphs. 

In Section \ref{seq}, we bound the lengths as well as the number of minimally non-linear sequences with $k$ distinct letters. These bounds are easier to obtain than the bounds on minimally non-linear 0-1 matrices and ordered graphs, since they rely mainly on the fact that every minimally non-linear sequence not isomorphic to $ababa$ must avoid $ababa$.

In Section \ref{01mat}, we bound the number of minimally non-linear 0-1 matrices with $k$ rows. We also prove that the ratio between the length and width of a minimally non-linear 0-1 matrix is at most $4$ and that a minimally non-linear 0-1 matrix with $k$ rows has at most $5k-3$ ones. In Section \ref{ordg}, we find corresponding bounds for extremal functions of forbidden ordered graphs. 

\section{Minimally non-linear patterns in sequences}\label{seq}

A sequence $u$ contains a sequence $v$ if some subsequence of $u$ is isomorphic to $v$. Otherwise $u$ avoids $v$. If $u$ has $r$ distinct letters, then the function $\Ex(u, n)$ is the maximum possible length of any sequence that avoids $u$ with $n$ distinct letters in which every $r$ consecutive letters are distinct. 

Like the extremal function $\ex(n, P)$ for forbidden 0-1 matrices, $\Ex(u, n)$ has been used for many applications in combinatorics and computational geometry. These applications include upper bounds on the complexity of lower envelopes of sets of polynomials of bounded degree \cite{3}, the complexity of faces in arrangements of arcs with a limited number of crossings \cite{1}, and the maximum possible number of edges in $k$-quasiplanar graphs on $n$ vertices with no pair of edges intersecting in more than $t$ points \cite{5, 10}.

Minimal non-linearity for $\Ex(u, n)$ is defined as for $\ex(n, P)$. Only the sequences equivalent to $ababa$, $abcacbc$, or its reversal are currently known to be minimally non-linear, but a few other minimally non-linear sequences are known to exist \cite{petmnl}.

In order to bound the number of minimally non-linear sequences with $k$ distinct letters, we bound the length of such sequences in terms of the extremal function $\Ex(a b a b a, k)$, which satisfies $\Ex(ababa,k) \sim 2k\alpha(k)$ \cite{7, 8}.

In the next proof, we use a well-known fact about the function $\Ex(u, n)$ \cite{7}: If $u$ is a linear sequence and $u'$ is obtained from $u$ by inserting the letter $a$ between two adjacent occurrences of $a$ in $u$, then $u'$ is linear.

\begin{lem}
The maximum possible length of a minimally non-linear sequence with $k$ distinct letters is at most $2 \Ex(a b a b a, k)$.
\end{lem}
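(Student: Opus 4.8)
The plan is to let $u$ be a minimally non-linear sequence with $k$ distinct letters and to split into two cases according to whether $u$ is isomorphic to $ababa$. If $u \cong ababa$, then $k=2$ and $|u| = 5$; since $abab$ is a legal host in the definition of $\Ex(ababa,2)$ we have $\Ex(ababa,k) \ge 4$, so the desired inequality $|u| \le 2\Ex(ababa,k)$ holds immediately. Hence I would concentrate on the case $u \not\cong ababa$, where the stated fact that every minimally non-linear sequence not isomorphic to $ababa$ must avoid $ababa$ guarantees that $u$ itself avoids $ababa$.

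The first key step is to bound the length of every maximal block of $u$ (a maximal run of equal consecutive letters) by $2$. Suppose instead that $u$ contains three consecutive copies of some letter $a$. Deleting the middle copy produces a sequence $u'$ that is strictly contained in $u$, hence linear by minimal non-linearity. But $u$ is recovered from $u'$ by inserting a copy of $a$ between the two now-adjacent occurrences of $a$, so the well-known insertion fact stated just before the lemma forces $u$ to be linear as well, contradicting $\Ex(u,n) = \omega(n)$. Therefore every maximal block of $u$ has length at most $2$.

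Next I would pass to the contraction $\hat{u}$ obtained by replacing each maximal block of $u$ by a single copy of its letter. By construction $\hat{u}$ has no two equal adjacent letters, it uses exactly the same $k$ distinct letters as $u$, and its length equals the number of blocks of $u$; since each block has length at most $2$, this gives $|u| \le 2|\hat{u}|$. Crucially, $\hat{u}$ still avoids $ababa$: any copy of $ababa$ in $\hat{u}$ occupies five positions lying in five distinct blocks, and choosing one representative position of $u$ from each of these blocks produces a copy of $ababa$ in $u$, contradicting that $u$ avoids it. Thus $\hat{u}$ is a sequence on $k$ distinct letters with every two consecutive letters distinct that avoids $ababa$, so it is a legal host in the definition of $\Ex(ababa,k)$, whence $|\hat{u}| \le \Ex(ababa,k)$ and $|u| \le 2|\hat{u}| \le 2\Ex(ababa,k)$.

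I expect the main obstacle to be the block-length bound: one must apply the insertion fact in its contrapositive form to the correct strict subpattern and check that deleting an interior letter of a block of length $3$ indeed recreates two adjacent equal letters, so that the hypothesis of the fact is met. The verification that contraction preserves $ababa$-avoidance is routine but essential, since without it the reduction to $\Ex(ababa,k)$ would not go through, and it is precisely the possible blocks of length $2$ that account for the factor of $2$ in the bound.
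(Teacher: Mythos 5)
Your proposal is correct and follows essentially the same route as the paper: rule out blocks of length $3$ via the insertion fact applied to the strict subpattern obtained by deleting a repeated letter, then bound the number of blocks by $\Ex(ababa,k)$ using that $u$ avoids $ababa$, giving the factor of $2$. The only differences are that you make explicit the easy case $u \cong ababa$ and the verification that contraction preserves $ababa$-avoidance, both of which the paper leaves implicit.
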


\begin{proof}
First we claim that there is no immediate repetition of letters greater than $2$ in a minimally non-linear sequence. Suppose for contradiction that there is a minimally non-linear sequence $u$ with a repetition of length at least $3$. 

Remove one of the letters in the repetition and get $u'$. By definition $u'$ is linear, but then inserting the letter back still gives a linear sequence by the well-known fact stated before this lemma, a contradiction.

If $u$ is not isomorphic to $ababa$, then the number of segments of repeated letters in $u$ is at most $\Ex(ababa, k)$ because $u$ avoids $ababa$. Thus $u$ has length at most $2 \Ex(ababa, k)$ since each segment has length at most $2$.
\end{proof}

\begin{cor}
The number of minimally non-linear sequences with $k$ distinct letters is at most $2k \sum_{i=1}^{\Ex(ababa,k)}(2k-2)^{i-1}$.
\end{cor}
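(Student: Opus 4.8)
The plan is to count minimally non-linear sequences by encoding each one through its decomposition into maximal runs of a single repeated letter, which I will call \emph{blocks}. The proof of the previous lemma already supplies the two structural facts I need. First, a minimally non-linear sequence has no immediate repetition of length greater than $2$, so every block has length $1$ or $2$. Second, if such a sequence $u$ with $k$ distinct letters is not isomorphic to $ababa$, then $u$ avoids $ababa$, and hence the number of blocks of $u$ is at most $\Ex(ababa, k)$; this is exactly the bound used in the lemma to conclude that the length is at most $2\Ex(ababa, k)$. Consequently $u$ is completely determined by its ordered list of blocks, a list of length at most $\Ex(ababa, k)$ in which each entry records a letter together with a length in $\{1, 2\}$.

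Next I would count these block lists. Adjacent blocks must carry distinct letters, since otherwise they would merge into a single maximal run. Hence, for a list with exactly $i$ blocks, the first block can be chosen in at most $2k$ ways ($k$ letters times two possible lengths), while each of the remaining $i-1$ blocks can be chosen in at most $2(k-1) = 2k-2$ ways. This gives at most $2k(2k-2)^{i-1}$ lists of length $i$, and therefore at most $\sum_{i=1}^{\Ex(ababa,k)} 2k(2k-2)^{i-1} = 2k \sum_{i=1}^{\Ex(ababa,k)} (2k-2)^{i-1}$ block lists in total. Since the assignment sending $u$ to its block list is injective on minimally non-linear sequences not isomorphic to $ababa$ that use at most $k$ letters, their number is bounded by this sum. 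Passing from ``exactly $k$ letters'' to ``at most $k$ letters,'' and from ``valid block list'' to ``block list actually arising from a minimally non-linear sequence,'' only loosens the estimate, so both relaxations preserve the inequality.

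Finally I would dispose of the single excluded pattern $ababa$, which equals its own reversal and so contributes just one sequence. The encoding above is far from surjective: for instance, no single-block list can be the image of a minimally non-linear sequence, since such a sequence uses at least two letters, and there are $2k \geq 4$ single-block lists. This slack absorbs the one extra pattern $ababa$, so the total count of all minimally non-linear sequences with $k$ distinct letters remains at most $2k \sum_{i=1}^{\Ex(ababa,k)} (2k-2)^{i-1}$. I do not expect a substantive obstacle here; the only points requiring care are the $2k$-versus-$(2k-2)$ bookkeeping for the first block, which follows from maximality of the runs, and the verification that the $ababa$ edge case fits within the slack of the overcount.
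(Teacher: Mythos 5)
Your proposal is correct and follows essentially the same approach as the paper's proof: decompose the sequence into maximal runs of length at most $2$, bound the number of runs by $\Ex(ababa,k)$ using avoidance of $ababa$, and count $2k$ choices for the first run and $2k-2$ for each subsequent one. You are in fact slightly more careful than the paper, which silently ignores the exceptional sequence $ababa$ itself (not covered by the run-count bound when $k=2$); your observation that the unused single-block lists absorb this one extra pattern closes that small gap.
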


\begin{proof}
The number of segments of repeated letters is at most $\Ex(ababa, k)$. Each segment can be filled with one of at most $k$ letters, with length $1$ or $2$, with no adjacent segments having the same letters. 

Thus there are at most $2k$ choices for the first segment and at most $2k-2$ choices for the remaining segments. So the number of such sequences is bounded by $2k \sum_{i=1}^{\Ex(ababa,k)}(2k-2)^{i-1}$.
\end{proof}

\section{Minimally non-linear patterns in 0-1 matrices}\label{01mat}

Although the existence of infinitely many minimally non-linear 0-1 matrices was proved in \cite{G}, only finitely many minimally non-linear 0-1 matrices have been identified. It is an open problem to identify an infinite family of minimally non-linear 0-1 matrices.

In this section, we prove an upper bound of $\sum_{i=\lceil (k+2)/4
\rceil}^{4k-2}(i^{k}-(i-1)^{k}) k^{i-1}$ on the number of minimally non-linear 0-1 matrices with $k$ rows. In order to obtain this bound, we first show that any minimally non-linear 0-1 matrix with $k$ rows has at most $4k-2$ columns. Next, we bound the number of minimally non-linear 0-1 matrices with $k$ rows and $c$ columns. We prove this bound by showing that no column of a minimally non-linear 0-1 matrix has multiple ones after leftmost ones are removed from each row, unless the matrix is the $2 \times 2$ matrix of all ones, $\begin{bmatrix}
1 & 0 & 1 \\
0 & 1 & 1
\end{bmatrix}$, or its reflection over a horizontal line.

In order to bound the ratio between the length and width of any minimally non-linear 0-1 matrix, we use a few well-known lemmas about 0-1 matrix extremal functions. These facts are proved in \cite{FH, gtar}.

\begin{lem}
\begin{enumerate}
\item If $P$ has two adjacent ones $x$ and $y$ in the same row in columns $c$ and $d$, and $P'$ is obtained from $P$ by inserting a new column between $c$ and $d$ with a single one between $x$ and $y$ and zeroes elsewhere, then $\ex(n, P) \leq \ex(n, P') \leq 2 \ex(n, P)$.
\item If $P'$ is obtained by inserting columns or rows with all zeroes into $P$, then $\ex(n, P') = O(\ex(n, P)+n)$.
\item If $P = \begin{bmatrix}
1 & 0 & 1 & 0 \\
0 & 1 & 0 & 1
\end{bmatrix}$, then $\ex(n, P) = \Theta(n \alpha(n))$, where $\alpha(n)$ denotes the inverse Ackermann function.
\end{enumerate}
\end{lem}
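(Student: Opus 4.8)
The plan is to treat the three parts separately: parts (1) and (2) admit short amortization arguments that I would give in full, while part (3) is the genuinely hard statement and I would reduce it to the theory of Davenport--Schinzel sequences rather than reprove it from scratch. Throughout I use the monotonicity noted in the introduction, that $P'$ containing $P$ forces $\ex(n,P)\le\ex(n,P')$.

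For part (1) the lower bound is immediate: deleting the inserted column from $P'$ recovers $P$, so $P'$ contains $P$ and hence $\ex(n,P)\le\ex(n,P')$. For the upper bound I would use a two-coloring. Let $M$ be an $n\times n$ matrix avoiding $P'$. In each row list the ones from left to right and color them alternately with two colors, and let $M_0,M_1$ be the matrices retaining only the ones of each color. It suffices to show each $M_c$ avoids $P$, since then the number of ones of $M$ is $|M_0|+|M_1|\le 2\ex(n,P)$. Suppose instead $M_c$ contains a copy of $P$. The images of the adjacent ones $x,y$ (columns $c$ and $d=c+1$) are two ones of color $c$ in some row $r'$ of $M$; since ones of the same color occupy positions of equal parity in the left-to-right order of that row, there is at least one one of $M$ strictly between them in row $r'$, hence in a column strictly between the images of $c$ and $d$. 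Placing the inserted column of $P'$ at that intermediate one (which is unconstrained off row $r'$, because the inserted column of $P'$ is zero off row $r$) extends the copy of $P$ to a copy of $P'$ in $M$, a contradiction. Thus each $M_c$ avoids $P$ and the upper bound follows.

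For part (2) it suffices, by iterating, to insert a single all-zero column (all-zero rows being symmetric). Let $P'$ be $P$ with one zero column inserted between columns $c$ and $d=c+1$, and let $M$ be $n\times n$ avoiding $P'$. Let $M'$ and $M''$ be the submatrices of $M$ formed by the odd-indexed and even-indexed columns respectively. If, say, $M'$ contained $P$, then the image columns would all be odd-indexed in $M$, so consecutive image columns differ by at least two and there is a spare column of $M$ between the images of $c$ and $d$; since the inserted column of $P'$ is all zero it imposes no constraint, so $M$ would contain $P'$, a contradiction. Thus $M'$ and $M''$ both avoid $P$, and after padding each to an $n\times n$ matrix with zero columns (which adds no ones and preserves avoidance) we get $|M|=|M'|+|M''|\le 2\ex(n,P)$. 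Inserting a constant number of zero lines therefore changes the extremal function by at most a constant factor, and the additive $n$ absorbs degenerate patterns, giving $\ex(n,P')=O(\ex(n,P)+n)$.

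Part (3) is the main obstacle, and I would not attempt an elementary proof. The pattern $\begin{bmatrix}1&0&1&0\\0&1&0&1\end{bmatrix}$ encodes a length-four alternation of its two rows, and a two-way reduction identifies its extremal function with the maximum length of a Davenport--Schinzel sequence of order $3$, i.e.\ a sequence with no immediate repetition and no alternation $ababa$; the shift from a four-term alternation in the matrix to a five-term forbidden alternation $ababa$ in the sequence is exactly the subtle point. The upper bound $\ex(n,P)=O(n\alpha(n))$ then follows from the classical estimate $\lambda_3(n)=O(n\alpha(n))$ of Hart and Sharir, and the matching lower bound $\Omega(n\alpha(n))$ follows by lifting their extremal $ababa$-free construction to a $0$--$1$ matrix. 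Since this rests on the full Davenport--Schinzel machinery, I would cite \cite{FH, gtar} for the details rather than reproduce them.
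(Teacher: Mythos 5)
The paper does not actually prove this lemma: it is stated as a collection of known facts and the proof is deferred entirely to \cite{FH, gtar}, so there is no in-paper argument to compare against. Your proposal supplies the standard proofs from that literature, and they are correct. For part (1), the alternating two-coloring of the ones in each row is exactly the F\"{u}redi--Hajnal/Tardos argument: two same-colored ones in a row are separated by a third one of that row, which can serve as the image of the inserted column precisely because that column is zero outside the row of $x$ and $y$, and because $c$ and $d$ are adjacent in $P$ so no other column of the copy lies between their images. For part (2), the odd/even column split is likewise the standard interleaving argument; the one loose end is that you only handle insertion strictly between two columns of $P$ and wave at ``degenerate patterns'' for insertion before the first or after the last column, where the right fix is to discard the first or last column of $M$ (losing at most $n$ ones, which is exactly what the additive $n$ in the statement is for) --- worth stating explicitly, but not a gap in substance. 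A second pedantic point: padding the $n\times\lceil n/2\rceil$ matrix $M'$ back to $n\times n$ with trailing zero columns preserves avoidance of $P$ only when $P$ has no all-zero suffix of columns, which is the situation in which the lemma is used. For part (3) you correctly identify the content as the two-way correspondence with order-3 Davenport--Schinzel sequences and the Hart--Sharir bound $\lambda_3(n)=\Theta(n\alpha(n))$; declining to reproduce that machinery and citing it, as the paper itself does for all three parts, is the right call. Overall your route is more self-contained than the paper's, and the arguments you give for (1) and (2) are the ones found in \cite{FH} and \cite{gtar}.
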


The next theorem shows that a minimally non-linear 0-1 matrix must not be more than four times longer than it is wide. The greatest known ratio between the length and width of a minimally non-linear 0-1 matrix is $2$ for the matrix $\begin{bmatrix}
1 & 0 & 1 & 0 \\
0 & 1 & 0 & 1
\end{bmatrix}$.

\begin{thm}\label{main01}
The ratio of width over height of any minimally non-linear matrix is between $0.25$ and $4$.
\end{thm}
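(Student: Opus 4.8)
The plan is to prove only one direction, namely that a minimally non-linear matrix with $k$ rows has at most $4k$ columns (equivalently, width/height $\geq 1/4$); the other bound, width/height $\leq 4$, then follows by transposing, since reflection and rotation preserve both minimal non-linearity and the extremal function, as noted in the introduction. So let $P$ be minimally non-linear with $k$ rows and $c$ columns, and suppose for contradiction that $c$ is large relative to $k$; I want to derive that $P$ is in fact linear, contradicting $\ex(n,P)=\omega(n)$.

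The engine of the argument should be the pigeonhole principle combined with part~(1) of the preceding lemma. The intuition is that if $P$ has far more columns than rows, then many columns must be ``sparse,'' and in fact some structure must repeat or be insertable in a way that lets us build $P$ up from a strict subpattern by the column-insertion operation. Concretely, I would first argue that in a minimally non-linear $P$ every column contains at least one $1$-entry (a column of all zeroes could be deleted to get a strict subpattern $P'$ with $\ex(n,P')=O(n)$, and then part~(2) of the lemma, $\ex(n,P')=O(\ex(n,P'')+n)$ applied to the reinsertion, would force $\ex(n,P)=O(n)$, a contradiction). The key step is then to locate, among the $c$ columns, a column whose single relevant $1$ sits strictly between two adjacent $1$-entries in a common row of the remaining matrix: deleting that column yields a strict subpattern $P'$, which is linear by minimality, and part~(1) of the lemma gives $\ex(n,P)\leq 2\,\ex(n,P')=O(n)$, the desired contradiction.

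The main obstacle, and the heart of the theorem, is making the pigeonhole count give exactly the constant $4$ rather than some larger constant. I expect the delicate bookkeeping to be in controlling the ``leftmost ones'': following the strategy flagged earlier in the paper (the remark that no column has multiple ones after leftmost ones are removed from each row), I would remove the leftmost $1$ from each of the $k$ rows, accounting for at most $k$ columns, and then argue that the remaining non-leftmost ones, distributed across the surviving columns, must be sparse enough that a column qualifying for the insertion operation of part~(1) is forced once the number of columns exceeds roughly $4k$. Balancing the two contributions---columns that are ``protected'' because they hold a leftmost one or an isolated one, versus columns that admit a deletion-and-reinsertion---is where the factor $4$ will emerge, and pinning down the extremal configurations (which should be exactly the matrices listed in the introduction, such as $\left[\begin{smallmatrix}1&0&1&0\\0&1&0&1\end{smallmatrix}\right]$ achieving ratio $2$) is what keeps the constant from being improvable by this method below $4$.
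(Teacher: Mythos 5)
Your setup is sound: the transposition symmetry reduces the theorem to bounding the number of columns $c$ in terms of the number of rows $k$, the observation that no column is all zeroes (via part~(2) of the lemma) is correct, and your use of part~(1) to rule out a column whose single one is flanked by two adjacent ones in the same row matches one ingredient of the paper's argument. But the proposal stops exactly where the real work begins, and the mechanism you gesture at --- a pigeonhole count on leftmost ones forcing a column that qualifies for the part~(1) insertion once $c$ exceeds roughly $4k$ --- is not the right engine and would not work. Part~(1) only lets you delete a column whose one sits between two \emph{adjacent} ones \emph{in the same row}; a matrix can have one one per column, no such configuration anywhere, and still have many columns, so nothing in your bookkeeping forces a deletable column at $c \approx 4k$. (The leftmost-ones device you invoke is the technique the paper uses for a different result, Lemma~\ref{01edge} on the number of ones, not for the column bound.)

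The missing idea is a Davenport--Schinzel argument. The paper selects one one from each column of $P$ (choosing, in each column, the first row carrying a one that differs from the previous column's choice, except for single-one columns), reads off a sequence $S$ of length $c$ over the alphabet of $k$ rows, and then makes two observations: first, $S$ has no three consecutive equal letters, by exactly your part~(1) argument; second, and crucially, $S$ must avoid the pattern $abab$, because an occurrence of $abab$ would mean $P$ contains $\left[\begin{smallmatrix}1&0&1&0\\0&1&0&1\end{smallmatrix}\right]$ or its reflection, which is non-linear by part~(3) of the lemma --- and a minimally non-linear matrix cannot properly contain a non-linear matrix (the case where $P$ \emph{equals} that matrix is handled separately). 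The bound $c = |S| \leq 2\Ex(abab,k) = 4k-2$ then follows from the classical fact that an $abab$-free sequence over $k$ letters has at most $2k-1$ blocks, each of length at most $2$ here. It is this containment-of-a-non-linear-subpattern argument, not a count of sparse columns, that produces the constant $4$; without it your proof cannot be completed along the lines you describe.
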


\begin{proof}
Since the lemma holds for $\begin{bmatrix}
1 & 0 & 1 & 0 \\
0 & 1 & 0 & 1
\end{bmatrix}$ and its reflections, suppose that $P$ is a minimally non-linear 0-1 matrix with $k$ rows that is not equal to $\begin{bmatrix}
1 & 0 & 1 & 0 \\
0 & 1 & 0 & 1
\end{bmatrix}$ or its reflections. 

Let $P'$ be obtained by scanning through the columns of $P$ from left to right. The first column of $P'$ has a one only in the first row where the first column of $P$ has a one. For $i > 1$, the $i^{th}$ column of $P'$ has a one only in the first row where the $i^{th}$ column of $P$ has a one and where the $(i-1)^{st}$ column of $P'$ does not have a one, unless the $i^{th}$ column of $P$ only has a single one. If the $i^{th}$ column of $P$ only has a single one in row $r$, then the $i^{th}$ column of $P'$ has a one only in row $r$.

The reduction produces a 0-1 matrix with a single one in each column. Let each of the rows $1, \ldots, k$ of $P$ and $P'$ correspond to a letter $a_{1}, \ldots, a_{k}$, and construct a sequence $S$ from $P'$ so that the $i^{th}$ letter of $S$ is $a_{j}$ if and only if $P'$ has a one in row $j$ and column $i$.

By definition $|S|$ equals the number of columns of the minimally non-linear pattern $P$. There cannot be $3$ adjacent same letters in $S$, because any $3$ adjacent same letters implies a column in $P$ with a single $1$ and the immediate right and left neighbors of the $1$-entry being $1$ as well, which would imply that $P$ is not minimally non-linear. Also $S$ avoids $abab$ because otherwise $P$ contains
$\begin{bmatrix}
1 & 0 & 1 & 0 \\
0 & 1 & 0 & 1
\end{bmatrix}$ or its reflection, which are non-linear. So $|S|\leq 2\Ex(abab, k)=4k-2$. This shows that the ratio of width over height of a minimally non-linear matrix is between $0.25$ and $4$.
\end{proof}

Using the bound that we obtained on the number of columns in a minimally non-linear 0-1 matrix with $k$ rows, next we prove that the number of ones in a minimally non-linear 0-1 matrix with $k$ rows is at most $5k-3$. Note that any minimally non-linear 0-1 matrix with $k$ rows has at least $k$ ones since it has no rows with all zeroes.

In order to bound the number of ones in a minimally non-linear 0-1 matrix with $k$ rows, we first prove a more general bound on the number of ones in a minimally non-linear 0-1 matrix with $k$ rows and $c$ columns, assuming that it is not the $2 \times 2$ matrix of all ones.

\begin{lem}\label{01edge}
The number of ones in any minimally non-linear 0-1 matrix with $k$ rows and $c$ columns, besides the $2 \times 2$ matrix of all ones, is at most $k+c-1$.
\end{lem}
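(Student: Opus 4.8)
The plan is to analyze the effect of deleting, from each row, its leftmost one. Since a minimally non-linear matrix has no all-zero row, every row contributes exactly one such entry, so this deletion removes exactly $k$ ones; write $P''$ for the matrix that remains. If I can show that $P''$ has at most $c-1$ ones, then $P$ has at most $k+(c-1)=k+c-1$ ones, as desired. The saving of one column comes essentially for free: every one in the first column of $P$ is automatically the leftmost one of its row, so after the deletion the first column of $P''$ is empty and all surviving ones lie in the $c-1$ columns $2,\dots,c$.

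The heart of the argument is the claim that, after the deletion, no column of $P''$ contains two ones, with the sole exceptions of $\begin{bmatrix}1&0&1\\0&1&1\end{bmatrix}$ and its reflection over a horizontal line. To prove this I would argue by contradiction: suppose some column $j$ (necessarily $j\ge 2$) contains two surviving ones, in rows $r_1<r_2$. Each is a non-leftmost one, so row $r_i$ has its leftmost one in some column $a_i<j$. I would then split into cases according to how $a_1$ and $a_2$ compare. When $a_1=a_2$, or when one row also happens to carry a one in the other row's leftmost column, rows $r_1,r_2$ span a $\begin{bmatrix}1&1\\1&1\end{bmatrix}$; when $a_1<a_2$ (resp.\ $a_1>a_2$) and the complementary entry is zero, rows $r_1,r_2$ restricted to columns $a_1,a_2,j$ form exactly $\begin{bmatrix}1&0&1\\0&1&1\end{bmatrix}$ (resp.\ its horizontal reflection).

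In each case $P$ contains one of these small non-linear patterns. The key principle I would invoke is that a minimally non-linear matrix cannot strictly contain a non-linear matrix, since every strict subpattern of it is linear by definition; as $\begin{bmatrix}1&1\\1&1\end{bmatrix}$ and $\begin{bmatrix}1&0&1\\0&1&1\end{bmatrix}$ are themselves minimally non-linear (they appear in the list of $2$-row minimally non-linear matrices), containment forces $P$ to equal the pattern found. The all-ones $2\times 2$ matrix is excluded by hypothesis, so that branch cannot survive, while the $\begin{bmatrix}1&0&1\\0&1&1\end{bmatrix}$ branch lands us precisely in the advertised exceptions. This establishes the claim.

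Finally I would close the loop by checking the exceptional matrices by hand: for $\begin{bmatrix}1&0&1\\0&1&1\end{bmatrix}$ and its reflection one has $k=2$, $c=3$, and exactly $4$ ones, while $k+c-1=4$, so the bound holds with equality even though the per-column claim fails for them. For every other $P$, the claim gives at most one surviving one in each of columns $2,\dots,c$ and none in column $1$, hence at most $c-1$ surviving ones and at most $k+c-1$ ones in total. The step I expect to require the most care is the case analysis establishing the claim: one must track the relative order of the two leftmost-one columns together with the unknown complementary entries, and in particular notice that the exceptional patterns genuinely violate the per-column statement yet still satisfy the counting bound, which is exactly why they are exempted from the claim but not from the lemma.
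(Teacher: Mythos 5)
Your proposal is correct and follows essentially the same approach as the paper: delete the leftmost one of each row, note that this empties the first column, and show that no column can retain two ones because that would force $P$ to contain --- and hence, by minimality, to equal --- one of the non-linear patterns $\begin{bmatrix}1&1\\1&1\end{bmatrix}$, $\begin{bmatrix}1&0&1\\0&1&1\end{bmatrix}$, or its horizontal reflection, with the exceptional patterns verified directly. Your explicit case analysis on the relative positions of the two leftmost ones simply fills in the containment step that the paper asserts without detail.
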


\begin{proof}
The result is true for $Q = \begin{bmatrix}
1 & 0 & 1 \\
0 & 1 & 1
\end{bmatrix}$, so suppose that $P$ is a minimally non-linear 0-1 matrix with $k$ rows that is not equal to $Q$, its reflection $\bar{Q}$ over a horizontal line, or the $2 \times 2$ matrix $R$ of all ones. Then $P$ must avoid $Q$, $\bar{Q}$, and $R$. 

If $P$ has $k$ rows and $c$ columns, then remove the first one in each row to obtain a new matrix $P'$. Matrix $P'$ cannot have any column with multiple ones, since otherwise $P$ would contain $Q$, $\bar{Q}$, or $R$. Thus $P'$ has at most $c-1$ ones since the first column has no ones, so $P$ has at most $k+c-1$ ones. 
\end{proof}

\begin{cor}
The number of ones in any minimally non-linear 0-1 matrix with $k$ rows is at most $5k-3$.
\end{cor}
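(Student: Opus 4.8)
The plan is to combine the two results already established in this section, with essentially no new work. Lemma \ref{01edge} bounds the number of ones in any minimally non-linear matrix with $k$ rows and $c$ columns, other than the $2 \times 2$ matrix of all ones, by $k + c - 1$; and the proof of Theorem \ref{main01} shows that such a matrix has at most $4k - 2$ columns. Substituting $c \le 4k-2$ into the edge bound immediately gives at most $k + (4k-2) - 1 = 5k - 3$ ones.

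The one point that requires care is isolating the correct column bound. Although Theorem \ref{main01} is phrased as the ratio statement that width over height lies between $0.25$ and $4$, its proof actually establishes the sharper inequality $|S| \le 2\Ex(abab, k) = 4k-2$ for the number of columns $c = |S|$. First I would invoke this explicit count rather than the nominal ratio: plugging in the weaker $c \le 4k$ coming from the ratio $4$ would yield only $5k - 1$, so it is essential to cite the $4k-2$ obtained in the body of that proof. With $c \le 4k - 2$ in hand, Lemma \ref{01edge} applies to every minimally non-linear matrix $P$ with $k$ rows that is not the $2 \times 2$ all-ones matrix, giving at most $k + c - 1 \le 5k - 3$ ones.

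It then remains only to dispose of the single excluded case, the $2 \times 2$ matrix of all ones $R$, which falls outside the hypothesis of Lemma \ref{01edge}. Here $k = 2$ and $R$ has exactly $4$ ones, while $5k - 3 = 7$, so the bound holds trivially in this case as well, completing the argument. I do not expect any genuine obstacle: the corollary is a direct consequence of the column bound from Theorem \ref{main01} and the edge bound from Lemma \ref{01edge}, together with the one-line verification for $R$. The only subtlety to flag is the need to use the column count $4k-2$ rather than the coarser ratio bound.
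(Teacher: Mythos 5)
Your proposal is correct and follows essentially the same route as the paper: the paper's proof simply substitutes the column bound $c \le 4k-2$ from the proof of Theorem \ref{main01} into the $k+c-1$ bound of Lemma \ref{01edge}. Your explicit check of the excluded $2\times 2$ all-ones case is a small point of extra care that the paper's one-line proof omits, but it does not change the argument.
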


\begin{proof}
Suppose that the minimally non-linear 0-1 matrix $P$ has $k$ rows and $c$ columns. Since $c \leq 4k-2$, matrix $P$ has at most $5k-3$ ones.
\end{proof}

Using the bound on the number of columns in a minimally non-linear 0-1 matrix with $k$ rows, combined with the technique that we used to bound the number of ones in a minimally non-linear 0-1 matrix with $k$ rows, we prove an upper bound on the number of minimally non-linear 0-1 matrices with $k$ rows.

\begin{cor}
For $k > 2$, the number of minimally non-linear 0-1 matrices with $k$ rows is at most $\sum_{i=\lceil (k+2)/4 \rceil}^{4k-2}(i^{k}-(i-1)^{k})k^{i-1}$.
\end{cor}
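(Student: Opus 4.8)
The plan is to combine Theorem \ref{main01}, which pins down the possible number of columns, with the leftmost-one reduction from Lemma \ref{01edge}, which controls the internal structure once the number of columns is fixed, and then to sum over all admissible column counts. Throughout I take $k>2$, so that the exceptional two-row matrices $R$, $Q$, $\bar{Q}$, and $\begin{bmatrix}1&0&1&0\\0&1&0&1\end{bmatrix}$ play no role and Lemma \ref{01edge} applies to every matrix under consideration.

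First I would fix the range of the outer summation index $i$, the number of columns. Theorem \ref{main01} gives $i \le 4k-2$ at once. For the lower limit I would apply the same theorem to the transpose (equivalently, use that the width-to-height ratio is at least $1/4$, and that transposes of minimally non-linear matrices are again minimally non-linear): a minimally non-linear matrix with $i$ columns has at most $4i-2$ rows, so $k \le 4i-2$ and hence $i \ge (k+2)/4$, that is, $i \ge \lceil (k+2)/4\rceil$. This accounts for exactly the limits $\lceil (k+2)/4\rceil \le i \le 4k-2$.

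Next, for each fixed $i$ I would bound the number of minimally non-linear matrices with $k$ rows and $i$ columns by $(i^{k}-(i-1)^{k})\,k^{i-1}$, reconstructing such a matrix $P$ from two pieces of data. The first is the leftmost-one function $f$ from the $k$ rows to the $i$ columns; since $P$ has no all-zero column, the first column contains a one, which is automatically a leftmost one, so at least one row maps to column $1$, and the number of such functions is $i^{k}-(i-1)^{k}$. The second piece is the reduced matrix $P'$ obtained by deleting every leftmost one, which by Lemma \ref{01edge} has at most one $1$ in each column and none in the first, so it lives on the $i-1$ columns $2,\dots,i$. Multiplying the two counts and summing over $i$ then yields the claimed bound.

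The step I expect to require the most care is the factor $k^{i-1}$ for $P'$, since naively each of the $i-1$ relevant columns could hold a one in any of the $k$ rows or be empty, suggesting $(k+1)^{i-1}$. The resolution I would pursue is to encode $P'$ by a single function $g\colon\{2,\dots,i\}\to[k]$: if column $j$ carries its unique extra one in row $r$, set $g(j)=r$, and here $f(r)<j$; if column $j$ carries no extra one, then because $P$ has no all-zero column it must carry a leftmost one, so I may set $g(j)$ to a row whose leftmost one lies in column $j$, giving $f(g(j))\ge j$. The two cases are distinguished by comparing $f(g(j))$ with $j$, so the pair $(f,g)$ determines $P$ and the map is injective, which converts the crude $(k+1)^{i-1}$ into $k^{i-1}$. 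I would finally confirm the no-all-zero-row and no-all-zero-column properties of every minimally non-linear $P$ with $k>2$ (each follows from part 2 of the preceding lemma, since deleting a zero line leaves a proper non-linear subpattern), as these underlie both the ``column $1$ is hit'' condition and the encoding above.
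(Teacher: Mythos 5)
Your proposal is correct and follows essentially the same route as the paper: it restricts the column count to $\lceil(k+2)/4\rceil \le i \le 4k-2$ via Theorem \ref{main01} (and its transpose for the lower limit), counts leftmost-one placements by $i^{k}-(i-1)^{k}$ using the nonempty first column, and counts the reduced matrices by $k^{i-1}$ using the at-most-one-one-per-column fact from the proof of Lemma \ref{01edge}. Your explicit injection via the pair $(f,g)$ is just a more careful packaging of the paper's case analysis showing that each of columns $2,\dots,i$ contributes at most $k$ possibilities.
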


\begin{proof}
In a minimally non-linear 0-1 matrix with $k$ rows and $i$ columns, there are at most
$i^k-(i-1)^{k}$ possible combinations of leftmost ones that can be deleted in each row, because
having all leftmost ones in the rightmost $i-1$ columns implies that the first column is empty, which is
impossible. After leftmost ones are deleted in each row, each column except the first has at most a
single one. If a column has no one removed, then it stays non-empty with $k$ possibilities. If a
column has at least a one removed, say in the second row, then it cannot become a column with a
one in the second row. In either case, every column except for the first has at most $k$
possibilities, leaving at most $k^{i-1}$ possible matrices. Moreover there are between $\lceil (k+2)/4 \rceil$ and $4k-2$ columns in a minimally non-linear 0-1 matrix with $k$ rows.
\end{proof}

\section{Minimally non-linear patterns in ordered graphs}\label{ordg}

In this section, we prove bounds on parameters of minimally non-linear ordered graphs. The definitions of avoidance, extremal functions, and minimal non-linearity for ordered graphs are analogous to the corresponding definitions for 0-1 matrices.

If $H$ and $G$ are any ordered graphs, then $H$ avoids $G$ if no subgraph of $H$ is order isomorphic to $G$. The extremal function $\ex_{<}(n, G)$ is the maximum possible number of edges in any ordered graph with $n$ vertices that avoids $G$. 

Past research on $\ex_{<}$ has identified similarities with the $0-1$ matrix extremal function $\ex$. For example, Klazar and Marcus \cite{KM} proved that $\ex_{<}(n, G) = O(n)$ for every ordered bipartite matching $G$ with interval chromatic number $2$. This is analogous to the result of Marcus and Tardos \cite{MT} that $\ex(n, P) = O(n)$ for every permutation matrix $P$. Weidert also identified several parallels between $\ex_{<}$ and $\ex$ \cite{weid}, including linear bounds on extremal functions of forbidden tuple matchings with interval chromatic number $2$. These bounds were analogous to the linear bounds for tuple permutation matrices that were proved in \cite{G}.

In order to prove results about minimally non-linear ordered graphs, we use two lemmas about $\ex_{<}(n, G)$. The first is from \cite{weid}:

\begin{lem}
\cite{weid} If $G'$ is created from $G$ by inserting a single vertex $v$ of degree one between two consecutive vertices that are both adjacent to $v$'s neighbor, then $\ex_{<}(n, G') \leq 2\ex_{<}(n, G)$.
\end{lem}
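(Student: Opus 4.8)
The plan is to prove this upper bound by an edge-partition argument analogous to the proof of part~(1) of the 0-1 matrix lemma above. Let $x$ denote the neighbor of $v$, and let $a,b$ be the two consecutive vertices of $G$, both adjacent to $x$, between which $v$ is inserted. Given any ordered graph $H$ on $n$ vertices that avoids $G'$, I would split $E(H)$ into two ordered graphs $H_1,H_2$ on the same vertex set, each avoiding $G$; then $|E(H)|=|E(H_1)|+|E(H_2)|\le 2\ex_{<}(n,G)$, and choosing $H$ to be extremal for $G'$ yields the claim.

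Before defining the split, I would normalize the position of $x$. Since $a$ and $b$ are consecutive in the order of $G$, no vertex of $G$ --- in particular not $x$ --- lies strictly between them, so either $x<a$ or $b<x$. As $\ex_{<}$ is invariant under reversing the vertex order, I may reflect $G$ and $G'$ if necessary and assume $x<a<b$, so that $x$ sits to the left of both relevant neighbors. This lets me color each edge by reference to its left endpoint, which is the key to making the two-coloring well defined: for each vertex $p$, list the neighbors of $p$ lying to its right as $q_1<q_2<\cdots$, and place the edge $\{p,q_j\}$ in $H_1$ if $j$ is odd and in $H_2$ if $j$ is even. Because every edge has a unique left endpoint, this assigns each edge a single color.

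It then remains to check that $H_1$ and $H_2$ both avoid $G$. Suppose $H_1$ contained a copy of $G$ under an order-preserving embedding $\phi$, and set $p=\phi(x)$, $q=\phi(a)$, $r=\phi(b)$; then $p<q<r$, and the edges $\{p,q\},\{p,r\}$ are odd-colored, so $q$ and $r$ are right-neighbors of $p$ of odd ranks, say $2i-1<2j-1$ with $i<j$. The right-neighbor of $p$ of rank $2i$ then lies strictly between $q$ and $r$ in the order, and since $a,b$ are consecutive in $G$ no vertex of $G$ is mapped strictly between $q$ and $r$; hence this intermediate neighbor is not used by $\phi$, and mapping $v$ to it extends $\phi$ to a copy of $G'$ in $H$, a contradiction. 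The same argument, with an odd-ranked neighbor falling between two even-ranked ones, shows $H_2$ avoids $G$ as well. The main obstacle in this plan is exactly the consistency of the edge-coloring --- an edge has two endpoints, and the naive parity of the neighbor-rank differs at each --- which the reflection normalization together with the left-endpoint reference is designed to overcome; the only other point needing care is the ``unused vertex'' step, which relies on $a$ and $b$ being consecutive in $G$.
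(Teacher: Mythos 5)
Your argument is correct. The paper itself gives no proof of this lemma (it is quoted from Weidert), so there is nothing internal to compare against; but your proof is the natural ordered-graph analogue of the standard F\"uredi--Hajnal parity-splitting argument for part (1) of the 0-1 matrix lemma, and it is essentially the argument in the cited source. The two points that need care --- making the two-coloring well defined by always ranking an edge at its left endpoint after reflecting so that $x<a<b$, and using the consecutiveness of $a$ and $b$ to guarantee that the intermediate neighbor $q_{2i}$ is not already in the image of the embedding --- are both handled correctly, so the extension of the copy of $G$ to a copy of $G'$ in $H$ goes through and yields $|E(H)|=|E(H_1)|+|E(H_2)|\le 2\ex_{<}(n,G)$ as claimed.
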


The second lemma and its proof is by Gabor Tardos via private communication \cite{gtardos}.

\begin{lem}
\cite{gtardos} If $G'$ is an ordered graph obtained from $G$ by adding an edgeless vertex, then $ex_{<}(n, G') = O(ex_{<}(n, G)+n)$.
\end{lem}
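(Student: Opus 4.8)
The plan is to start from a clean reformulation of what $G'$-avoidance means. Let $w$ be the edgeless vertex, inserted in the vertex order of $G'$ between two consecutive vertices $v_i$ and $v_{i+1}$ of $G$ (the cases where $w$ is leftmost or rightmost are analogous and slightly easier). Because $w$ carries no edges, a copy of $G'$ in a host ordered graph $H$ is nothing more than a copy of $G$ together with one further host vertex lying strictly between the images of $v_i$ and $v_{i+1}$; that extra vertex automatically realizes $w$, since $w$ imposes no edge constraints. Hence $H$ avoids $G'$ precisely when, in every copy of $G$ inside $H$, the vertices playing the roles of $v_i$ and $v_{i+1}$ are mapped to two consecutive vertices of $H$, with no host vertex between them. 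This is the ordered-graph analogue of the observation that an all-zero column imposes no constraint on a 0-1 matrix, which underlies part (2) of the matrix lemma above.

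Next I would carry out a parity split. Fix an $H$ on vertex set $\{1,\ldots,n\}$ avoiding $G'$ and separate its vertices into the even-indexed and odd-indexed classes. Within either class any two consecutive vertices are separated by a vertex of the other class, so there is always a host vertex between them. Consequently the induced subgraph on the even vertices must avoid $G$: a copy of $G$ there would place the images of $v_i$ and $v_{i+1}$ non-consecutively in $H$, creating the forbidden gap and hence $G'$. The same holds for the odd vertices, so each induced subgraph has at most $\ex_{<}(n,G)$ edges, and all edges of $H$ joining two vertices of the same parity number at most $2\,\ex_{<}(n,G)$. When $w$ instead sits at an extreme end of $G'$, deleting the single extreme host vertex already yields a $G$-avoiding graph, and the $O(n)$ edges incident to that vertex account for the additive linear term in the statement.

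What remains are the edges joining an even vertex to an odd vertex, and bounding these is the crux. The difficulty is that such cross edges cannot be captured by any vertex $2$-coloring with the separating property above: requiring that no two same-class vertices be consecutive is a proper $2$-coloring of the path on $\{1,\ldots,n\}$, which up to symmetry forces exactly the even/odd split. So cross edges must be controlled through the global $G'$-avoidance rather than through any single induced subgraph. The natural shortcuts, such as merging consecutive vertices or rounding each odd endpoint to a neighboring even vertex, all founder on the same consistency problem: a copy of $G$ in the contracted graph can force one host vertex to play two different roles, so it need not lift to a copy of $G$, let alone $G'$, in $H$. I expect this to be the main obstacle. The plan to overcome it is to exploit the edgelessness of $w$ via a charging argument: I would classify the cross edges according to the position $t$ at which a copy of $G$ would be forced to set $(\phi(v_i),\phi(v_{i+1}))=(t,t+1)$, and argue by pigeonhole that an excess of cross edges beyond $C(\ex_{<}(n,G)+n)$ produces a copy of $G$ with a genuine gap at this junction, contradicting $G'$-avoidance. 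Summing the same-parity bound, the cross-edge bound, and the $O(n)$ boundary term then yields $\ex_{<}(n,G')=O(\ex_{<}(n,G)+n)$.
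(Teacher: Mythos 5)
Your opening reformulation is correct and matches the paper's key observation: since $w$ is edgeless, a host graph $H$ avoids $G'$ exactly when every copy of $G$ in $H$ places the images of $v_i$ and $v_{i+1}$ on consecutive host vertices, so any induced subgraph of $H$ on a vertex set with no two consecutive vertices must avoid $G$ outright. The problem is that your argument then stops at precisely the point where the real work begins. The even/odd split certifies only the edges lying within a single parity class, and as you yourself observe, no deterministic $2$-coloring of $\{1,\dots,n\}$ with the ``separated'' property can ever capture a cross edge. Your proposed remedy --- classifying cross edges by the position $t$ where a copy of $G$ would be forced to set $(\phi(v_i),\phi(v_{i+1}))=(t,t+1)$ and invoking a pigeonhole --- is not an argument: a single cross edge does not determine a copy of $G$ or a junction position $t$, and you give no mechanism by which ``too many'' cross edges would assemble into a copy of $G$ with a gap at that junction. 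So the cross edges, which can constitute essentially all of $E(H)$, are left unbounded, and the proof is incomplete.

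The paper closes exactly this gap by randomizing rather than fixing one partition. Take a uniform random subset $R$ of $V(H)$ and prune it to a set $S$ with no two consecutive vertices (discard the later vertex of any consecutive pair in $R$). Then $H[S]$ avoids $G$ by your own reformulation, so it has at most $\ex_{<}(n,G)$ edges; and the crucial point is that \emph{every} edge of $H$ whose endpoints are not consecutive in $V(H)$ survives into $H[S]$ with probability at least $1/16$ (it suffices that $u,v\in R$ and $u-1,v-1\notin R$, four independent events on distinct vertices). Taking expectations gives $|E(H)| \le 16\,\ex_{<}(n,G) + n$, the additive $n$ covering the at most $n-1$ edges between consecutive vertices, which never survive the pruning. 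In short: the averaging over random separated subsets is the missing idea that lets a ``same-class'' bound control all non-consecutive edges simultaneously, and without it (or some substitute of comparable strength) your outline does not yield the lemma.
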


\begin{proof}
For simplicity assume the new isolated vertex in $G'$ is neither first nor last. Let $H'$ be an ordered graph avoiding $G'$. Take uniform random sample $R$ of the vertices of $H'$, then select a subset $S$ of $R$ deterministically by throwing away the second vertex from every pair of consecutive vertices in $V(H')$ if both of them were selected in $R$. Now $S$ is a subset of vertices without a consecutive pair, so $H=H'[S]$ avoids $G$, since you can stick in a vertex between any two wherever you wish. Now every edge of $H'$ has a minimum of $1/16$ chance of being in $H$ except the edges connecting neighboring vertices, which have no chance. Thus $w(H')<16E[w(H)]+n$ and we are done.
\end{proof}

Most of the results that we prove in this section about minimal non-linearity for the extremal function $\ex_{<}$ are analogous to the results that we proved in the last section about minimal non-linearity for the 0-1 matrix extremal function $\ex$. First we prove that the number of edges in any minimally non-linear ordered graph with $k$ vertices is at most $2k-2$. Since there are no singleton vertices in a minimally non-linear ordered graph, there is a lower bound of $k/2$ on the number of edges.

\begin{thm}\label{allord}
Any minimally non-linear ordered graph with $k$ vertices has at most $2k-2$ edges.
\end{thm}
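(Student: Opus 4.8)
The plan is to mirror the proof of Lemma~\ref{01edge}, replacing the ``remove the first one in each row'' reduction for matrices by a ``remove the first edge at each vertex'' reduction for ordered graphs. Let $G$ be a minimally non-linear ordered graph on vertices $v_1 < \cdots < v_k$. By the lemma of Tardos, $G$ has no isolated vertex, which yields the stated lower bound $k/2$ and, more importantly, guarantees that every vertex lies on an edge. For each vertex $v_i$ that has at least one neighbor preceding it in the order, I would delete the edge joining $v_i$ to its nearest left-neighbor; call the resulting set of deleted edges $F$. Each edge of $F$ is the nearest-left edge of exactly one vertex, namely its (larger) right endpoint, and $v_1$ contributes nothing, so $|F| \le k-1$. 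Writing $G' = G \setminus F$, it then suffices to prove $|E(G')| \le k-1$, since $E(G)$ is the disjoint union of $F$ and $E(G')$ and therefore $|E(G)| \le 2k-2$.

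To bound $E(G')$, I would assign each remaining edge to its right endpoint and show that every vertex receives at most one such edge, i.e.\ that in $G$ each vertex has at most two left-neighbors once the small exceptional graphs are set aside. The mechanism is the trichotomy used in Lemma~\ref{01edge}: if a vertex $w$ retained two left-neighbors $x$ and $y$ in $G'$, then each of $x,y$ also carries its own deleted nearest-left edge, and according to the relative order of the endpoints of those two deleted edges the (at most five) vertices involved realize one of a short list of small non-linear ordered graphs --- the ordered-graph analogs of $R$, $Q$, and $\bar{Q}$. Because $G$ is minimally non-linear and is not one of these small graphs, it must avoid all of them, giving the desired contradiction. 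Here the Weidert insertion lemma, together with the Klazar--Marcus linearity result \cite{KM} for matchings of interval chromatic number two, is what I would invoke both to certify that the listed small graphs are genuinely non-linear and to rule out the degenerate local configurations (a deletable degree-one vertex wedged between two ordering-neighbors of its partner) that would otherwise allow a vertex to retain an extra left-edge.

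The main obstacle is pinning down the exact finite list of small non-linear ordered graphs that play the role of $R$, $Q$, and $\bar{Q}$ and carrying out the full case analysis on the relative positions of the endpoints. Unlike the matrix setting there is no row/column bipartition, and the ``first edge'' of a vertex may point either left or right, so the clean trichotomy of Lemma~\ref{01edge} fragments into more cases that must each be checked to terminate in a forbidden configuration. I would handle these small graphs as explicit base cases, verifying the bound $2k-2$ for each directly (exactly as the $2\times 2$ all-ones matrix and the matrix $Q$ were treated separately in Lemma~\ref{01edge}), and then run the reduction above for every remaining minimally non-linear ordered graph. Some care is also needed to confirm that vertices with no left-neighbor are correctly excluded from both $F$ and the right-endpoint count, so that each of the two counts is genuinely at most $k-1$ and the bound is tight against the known ratio-$2$ example.
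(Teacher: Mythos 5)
Your overall strategy --- delete one edge per vertex to form a set $F$ with $|F|\le k-1$, then show the surviving graph $G'$ also has at most $k-1$ edges because each vertex keeps at most one incoming edge --- is the paper's strategy, but you have oriented the deletion the wrong way, and this breaks the key step. You delete from each vertex $v_i$ the edge to its \emph{nearest left-neighbor}, and then want each vertex to retain at most one left-edge, which amounts to claiming that every vertex of $G$ has at most two left-neighbors. That claim is unjustified, and your mechanism for it fails: if $w$ retains left-neighbors $x$ and $y$, all your deletion rule tells you is that $w$ has a third left-neighbor $z$ with $x,y<z<w$, i.e., a star centered at $w$ with all leaves to its left. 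A star is a linear pattern (as a 0-1 matrix it is a single all-ones row), so it cannot be forbidden in a minimally non-linear graph and no contradiction arises. Your fallback --- invoking the deleted nearest-left edges of $x$ and $y$ themselves --- presupposes that $x$ and $y$ have left-neighbors, which need not hold (take $x=v_1$). Concretely, if $w$ has three left-neighbors none of which has any other edge, your rule deletes only one of the three edges, two survive at $w$, and the bound $|E(G')|\le k-1$ by right endpoints collapses.

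The paper avoids this by cross-indexing the two counts: for each vertex $v_i$ it deletes the edge to $v_i$'s nearest \emph{right}-neighbor, so deleted edges are counted by their \emph{left} endpoints (at most $k-1$, since the last vertex deletes nothing), while survivors are counted by their \emph{right} endpoints. The point is that a surviving edge $(v_i,v_k)$ then certifies a deleted edge $(v_i,v_a)$ with $v_i<v_a<v_k$; two surviving edges $(v_i,v_k)$ and $(v_j,v_k)$ therefore yield four edges $(v_i,v_a),(v_i,v_k),(v_j,v_b),(v_j,v_k)$ with $v_a,v_b$ strictly between the endpoints, which realize an element of $G_o\left(\begin{bmatrix}1 & 0 & 1\\0 & 1 & 1\end{bmatrix}\right)\cup G_o\left(\begin{bmatrix}1 & 1\\1 & 1\end{bmatrix}\right)$ and contradict minimality. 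If you insist on your nearest-left deletion, you would have to count survivors by their \emph{left} endpoints instead (two surviving right-edges at a vertex $u$ certify intermediate left-neighbors of each of $u$'s two right-neighbors, producing the vertically reflected patterns); either way the deletion index and the survivor index must be opposite endpoints of the edge. Your treatment of the base cases and your use of the edgeless-vertex lemma to exclude isolated vertices are fine as stated.
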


\begin{proof}
For a 0-1 matrix $P$, define $G_o\left(P\right)$ to be the family of all bipartite ordered graphs with a unique decomposition into two independent sets that form a 0-1 matrix equivalent to $P$ when the vertices in each set are arranged in either increasing or decreasing order as columns and rows with edges corresponding to ones. Then every element of $G_o\left(\begin{bmatrix}
1 & 0 & 1\\
0 & 1 & 1
\end{bmatrix}\right) \cup G_o\left(\begin{bmatrix}
1 & 1\\
1 & 1
\end{bmatrix}\right)$ is non-linear for $\ex_<$, since $\begin{bmatrix}
1 & 0 & 1\\
0 & 1 & 1
\end{bmatrix}$ and $\begin{bmatrix}
1 & 1\\
1 & 1
\end{bmatrix}$ are non-linear for $\ex$ and any ordered graph with interval chromatic number more than $2$ is non-linear for $\ex_<$ \cite{weid}. 

The lemma is true for every element of $G_o\left(\begin{bmatrix}
1 & 0 & 1\\
0 & 1 & 1
\end{bmatrix}\right) \cup G_o\left(\begin{bmatrix}
1 & 1\\
1 & 1
\end{bmatrix}\right)$, so let $G$ be a minimally non-linear ordered graph that is not equal to any element of $G_o\left(\begin{bmatrix}
1 & 0 & 1\\
0 & 1 & 1
\end{bmatrix}\right) \cup G_o\left(\begin{bmatrix}
1 & 1\\
1 & 1
\end{bmatrix}\right)$. 

Thus $G$ avoids every element of $G_o\left(\begin{bmatrix}
1 & 0 & 1\\
0 & 1 & 1
\end{bmatrix}\right) \cup G_o\left(\begin{bmatrix}
1 & 1\\
1 & 1
\end{bmatrix}\right)$. Define an edge as $e=(v_i,v_j)$ where $v_i<v_j$. Remove all edges $(v_i,v_j)$ where $v_j$ is the smallest number $t$ such that $(v_i,t)\in E(G)$. There are at most $k-1$ such edges.

The resulting graph $G'$ cannot have both edges $(v_i, v_k)$ and $(v_j, v_k)$ for any node $v_k$. Because if it does, then there are $v_i<v_a<v_k$ and $v_j<v_b<v_k$ such that $(v_i,v_a)$, $(v_i,v_k)$, $(v_j,v_b)$, and $(v_j,v_k)$ are all in $E(G)$, and therefore $G$ must contain some element in $G_o\left(\begin{bmatrix}
1 & 0 & 1\\
0 & 1 & 1
\end{bmatrix}\right) \cup G_o\left(\begin{bmatrix}
1 & 1\\
1 & 1
\end{bmatrix}\right)$. Note that $v_a$ and $v_b$ may be identical, and there are no edges of the form $(v_i, v_1)$ where $v_1$ denotes the minimal vertex. Thus $|E(G')| \leq k-1$, so $|E(G)|\leq 2k-2$.
\end{proof}

The next result is analogous to the ratio bound for 0-1 matrices in Theorem \ref{main01}, except rows and columns are replaced by the parts of a bipartite ordered graph.

\begin{thm}
Any minimally non-linear bipartite ordered graph with $k$ vertices in one part has at most $4k-2$ vertices in the other part.
\end{thm}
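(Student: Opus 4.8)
The plan is to mirror the proof of Theorem~\ref{main01} essentially verbatim, trading the language of rows, columns, and $1$-entries for the two parts of the bipartite ordered graph and its edges. Write $A$ for the part with $k$ vertices (playing the role of the rows) and $B$ for the other part (playing the role of the columns), and assume $G$ has interval chromatic number $2$, so that $A$ and $B$ are intervals in the order; the goal is to show $|B|\le 4k-2$. Since each graph in $G_o\left(\begin{bmatrix}1&0&1&0\\0&1&0&1\end{bmatrix}\right)$ and its reflections has $k=2$ and only four vertices in $B$, the bound $4k-2=6$ holds trivially for them, so I would first dispose of these as a base case and then take $G$ to be a minimally non-linear bipartite ordered graph equal to none of them. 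Exactly as in Theorem~\ref{allord}, since each of these graphs is non-linear for $\ex_<$, minimality forces $G$ to avoid every one of them.

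Next I would reduce $G$ to a graph $G'$ in which every vertex of $B$ has degree exactly one, scanning the vertices of $B$ in increasing order and applying the direct analogue of the column reduction of Theorem~\ref{main01}: for each vertex of $B$ I keep the edge to the first vertex of $A$ (in the vertex order) adjacent to it that was not the vertex kept for the previous element of $B$, unless the current vertex of $B$ has degree one, in which case I keep its unique edge. Letting the vertices of $A$ be letters $a_1,\dots,a_k$, the graph $G'$ determines a sequence $S$ of length $|B|$ whose $i$-th letter records the $A$-neighbor of the $i$-th vertex of $B$. It then suffices to show that $S$ uses at most $k$ letters, contains no three equal consecutive letters, and avoids $abab$, for then $|B|=|S|\le 2\,\Ex(abab,k)=4k-2$.

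The two structural facts about $S$ come from the two forbidden configurations, just as in Theorem~\ref{main01}. If $S$ had three equal consecutive letters $a\,a\,a$, the reduction rule forces the middle vertex of $B$ to have degree one in $G$ with neighbor $a$, while its two neighbors in $B$ are also adjacent to $a$; deleting this middle vertex yields a strict, hence linear, subgraph $G''$, and reinserting it is precisely the degree-one insertion of the lemma from \cite{weid}, giving $\ex_<(n,G)\le 2\,\ex_<(n,G'')=O(n)$ and contradicting non-linearity. If $S$ contained $abab$, the corresponding four vertices of $B$ together with the vertices $a,b$ of $A$ form a copy of an element of $G_o\left(\begin{bmatrix}1&0&1&0\\0&1&0&1\end{bmatrix}\right)$ or its horizontal reflection, contradicting the avoidance established in the base case.

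The step I expect to require the most care is the three-consecutive-letters argument, because the lemma of \cite{weid} applies only when the two neighbors of the deleted degree-one vertex are consecutive in the order. I must therefore guarantee that the two relevant vertices of $B$ become consecutive after the deletion, i.e.\ that no vertex of $A$ lies strictly between them; this is exactly what interval chromatic number $2$ buys, since then $A$ and $B$ are intervals and consecutive vertices of $B$ stay consecutive once one of them is removed. (A minimally non-linear $G$ relevant here has interval chromatic number $2$, as any larger interval chromatic number already makes the graph non-linear by \cite{weid} and violates minimality.) The remaining point, that the reduction rule genuinely forces the middle vertex to have degree one whenever two consecutive letters of $S$ coincide, is the same bookkeeping as in Theorem~\ref{main01} and is routine.
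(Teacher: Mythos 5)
Your proposal is correct and follows the paper's argument essentially verbatim: the same greedy choice of one $A$-neighbor for each vertex of $B$, the same translation into a sequence that has no immediate repetition of length $3$ (via the degree-one insertion lemma of \cite{weid}) and avoids $abab$ (via containment of an element of $G_o\left(\begin{bmatrix}1 & 0 & 1 & 0\\ 0 & 1 & 0 & 1\end{bmatrix}\right)$), and the same conclusion $|B|=|S|\le 2\Ex(abab,k)=4k-2$. Your extra attention to interval chromatic number $2$ addresses a point the paper leaves implicit, though the justification should be that a larger interval chromatic number would force some \emph{proper} subgraph to be non-linear (contradicting minimality), not merely that $G$ itself would be non-linear.
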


\begin{proof}
Given a minimally non-linear bipartite ordered graph $G$, without loss of generality assume that the first part $U$ has $k$ nodes. For each node $v_i$ in the second part $V$, we choose a neighbor in the first part using a process analogous to the one that we used for 0-1 matrices: if $v_i$ has only one neighbor then pick it, otherwise pick the smallest neighbor different from what we pick for $v_{i-1}$. 

Now we get a sequence with $k$ distinct elements without any repetition of length more than $2$ because otherwise $G$ is not minimally non-linear. The sequence cannot be longer than $2\Ex(abab,k)=4k-2$, or else it would contain some element in
$G_o\left(\begin{bmatrix}
1 & 0 & 1 & 0 \\
0 & 1 & 0 & 1
\end{bmatrix}\right)$, which has all elements non-linear.
\end{proof}

Next we obtain an upper bound of $k-1$ on the number of edges in minimally non-linear bipartite ordered graphs with $k$ vertices unless the underlying graph is $K_{2,2}$. This bound is half the upper bound for minimally non-linear ordered graphs in Theorem \ref{allord}. The lemma that we use to obtain this bound is analogous to Lemma \ref{01edge}, which we used to bound the number of ones in minimally non-linear 0-1 matrices.

\begin{lem}
The number of edges in any minimally non-linear bipartite ordered graph with $w$ vertices in one part and $h$ vertices in the other part, besides ordered graphs whose underlying graph is $K_{2,2}$, is at most $w+h-1$. 
\end{lem}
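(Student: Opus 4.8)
The plan is to mirror the proof of Lemma~\ref{01edge}, replacing ``first one in each row'' by ``first edge at each vertex of one part'' and the forbidden matrices by the ordered-graph families $G_o\!\left(\begin{bmatrix}1&0&1\\0&1&1\end{bmatrix}\right)$ and $G_o\!\left(\begin{bmatrix}1&1\\1&1\end{bmatrix}\right)$. First I would dispose of the base case: every element of $G_o\!\left(\begin{bmatrix}1&0&1\\0&1&1\end{bmatrix}\right)$ has parts of sizes $2$ and $3$ and exactly $3$ edges, and $3 \le 2+3-1$, so the bound holds. I would then take $G$ to be a minimally non-linear bipartite ordered graph with parts $U$ and $V$ of sizes $w$ and $h$ that lies in neither $G_o\!\left(\begin{bmatrix}1&0&1\\0&1&1\end{bmatrix}\right)$ nor $G_o\!\left(\begin{bmatrix}1&1\\1&1\end{bmatrix}\right)$, the latter being exactly the graphs whose underlying graph is $K_{2,2}$. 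Since every element of both families is non-linear for $\ex_{<}$, minimality forces $G$ to avoid all of them.

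For the reduction, I would treat $U$ as the rows: for each $u \in U$, delete the edge joining $u$ to its smallest neighbour in $V$, deleting at most $w$ edges in total, and call the result $G'$. The key claim is that no vertex $v \in V$ keeps two neighbours in $G'$. Indeed, if $v$ were still adjacent to distinct $u_i, u_j \in U$, then neither edge was the first edge of its $U$-endpoint, so $u_i$ has a neighbour $v_a < v$ and $u_j$ has a neighbour $v_b < v$ in $G$. The four edges $u_i v_a$, $u_i v$, $u_j v_b$, $u_j v$ then induce an ordered subgraph lying in $G_o\!\left(\begin{bmatrix}1&1\\1&1\end{bmatrix}\right)$ when $v_a=v_b$ and in $G_o\!\left(\begin{bmatrix}1&0&1\\0&1&1\end{bmatrix}\right)$ when $v_a \ne v_b$, contradicting that $G$ avoids these families.

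It remains to count. The smallest vertex of $V$ is isolated in $G'$, because any edge meeting it is automatically the first edge of its $U$-endpoint and was deleted; together with the claim that every vertex of $V$ now has at most one neighbour, this gives $|E(G')| \le h-1$. Restoring the at most $w$ deleted edges yields $|E(G)| \le w+h-1$, as desired.

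The step I expect to require the most care is the forbidden-subgraph claim, because in the ordered-graph setting $u_i, u_j$ and $v_a, v_b, v$ may interleave arbitrarily in the total order, whereas the corresponding matrix argument never sees this interleaving. I would resolve this by appealing to the definition of $G_o(P)$ used in Theorem~\ref{allord}: the family consists of \emph{all} bipartite ordered graphs that decompose into two independent sets realizing $P$ up to equivalence, with each part read in increasing or decreasing order, so it already absorbs every interleaving of the two parts as well as the two possible orders of $v_a$ and $v_b$. The remaining checks are routine, namely that $v_a, v_b, v$ are genuinely distinct when $v_a \ne v_b$ (they are, since $v_a, v_b < v$) and that the induced within-part orders produce a matrix equivalent to the relevant pattern.
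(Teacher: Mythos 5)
Your proof is correct and follows essentially the same route as the paper: delete, for each $u\in U$, the edge to its smallest neighbour, then show no vertex of $V$ retains two neighbours by exhibiting a forbidden configuration from $G_o\!\left(\begin{bmatrix}1&0&1\\0&1&1\end{bmatrix}\right)\cup G_o\!\left(\begin{bmatrix}1&1&{}\\1&1&{}\end{bmatrix}\right)$. You in fact make explicit a step the paper leaves implicit, namely that the smallest vertex of $V$ is isolated after the deletion, which is what yields $h-1$ rather than $h$.
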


\begin{proof}
The result is clear if $G$ is an element of $G_{o}\left(
\begin{bmatrix}
1 & 0 & 1\\
0 & 1 & 1
\end{bmatrix}
\right)$, so suppose that $G$ is a minimally nonlinear bipartite ordered graph that is not an element of $G_{o}\left(
\begin{bmatrix}
1 & 0 & 1\\
0 & 1 & 1
\end{bmatrix}
\right)\cup G_{o}\left(
\begin{bmatrix}
1 & 1\\
1 & 1
\end{bmatrix}
\right)$. For each node $u\in U$, remove the edge $(u,v)\in E(G)$ with the smallest possible $v\in V$, no matter whether $u>v$ or $u<v$. So we remove exactly $|U|$ edges. 

Each $v_k\in V$ in the resulting graph $G'$ has at most one neighbor. If it has more, say $(u_a,v_k), (u_b,v_k)$, then there are $v_i$ and $v_j$, which could be identical, such that $v_i<v_k$, $v_j<v_k$ and $(u_a,v_i)\in E(G)$, $(u_b,v_j)\in E(G)$. Clearly $G$ contains some element in $G_{o}\left(
\begin{bmatrix}
1 & 0 & 1\\
0 & 1 & 1
\end{bmatrix}
\right)\cup G_{o}\left(
\begin{bmatrix}
1 & 1\\
1 & 1
\end{bmatrix}
\right)$, a contradiction. So $|E(G)|\leq |U|+|V|-1=|V(G)|-1$.
\end{proof}

\begin{cor}
The number of edges in any minimally non-linear bipartite ordered graph with $k$ total vertices is at most $k-1$ unless the underlying graph is $K_{2,2}$, and the number of edges in any minimally non-linear bipartite ordered graph with $k$ vertices in one part is at most $5k-3$. 
\end{cor}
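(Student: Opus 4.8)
The plan is to treat this corollary purely as a combination of the preceding lemma with the ratio theorem for bipartite ordered graphs, so that no new avoidance argument is required. For the first bound, I would let $G$ be a minimally non-linear bipartite ordered graph whose two parts have $w$ and $h$ vertices, so that $w+h=k$. The preceding lemma states that unless the underlying graph of $G$ is $K_{2,2}$, we have $|E(G)| \leq w+h-1$, which is exactly $k-1$. This immediately yields the first half of the statement, with the $K_{2,2}$ exception inherited verbatim from the lemma.

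For the second bound, I would suppose that one part, say $U$, has exactly $k$ vertices, and let $V$ denote the other part. By the ratio theorem for bipartite ordered graphs, $|V| \leq 4k-2$. Applying the preceding lemma with $w=|U|=k$ and $h=|V| \leq 4k-2$ gives $|E(G)| \leq w+h-1 \leq k+(4k-2)-1 = 5k-3$ whenever the underlying graph is not $K_{2,2}$. It then remains only to handle the exceptional case: if the underlying graph of $G$ is $K_{2,2}$, then $k=2$ and $|E(G)|=4$, while $5k-3=7$, so the bound $|E(G)| \leq 5k-3$ still holds. Hence the $K_{2,2}$ exception can be dropped from the second bound.

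I do not expect any genuine obstacle here, since both of the substantive inputs, namely the structural edge-removal argument behind the lemma and the sequence-avoidance argument behind the $4k-2$ ratio bound, are already established. The only point requiring care is the bookkeeping around the $K_{2,2}$ case: it must be retained as an explicit exception in the first (total-vertex) bound, but it is automatically subsumed in the second (one-part) bound because $K_{2,2}$ has so few vertices that $5k-3$ is far from tight for it.
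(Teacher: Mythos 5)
Your proposal is correct and matches the paper's (implicit) argument: the paper states this corollary without proof precisely because it follows by combining the preceding edge-count lemma with the $4k-2$ bound on the size of the other part, exactly as you do. Your observation that the $K_{2,2}$ exception is subsumed in the second bound (since $|E(K_{2,2})|=4\leq 5\cdot 2-3$) is the one detail worth making explicit, and you handle it correctly.
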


\begin{cor}
For $k > 2$, the number of minimally non-linear bipartite ordered graphs with $k$ nodes in one part is at most $\sum_{i=\lceil (k+2)/4 \rceil}^{4k-2}\binom{k+i}{k}(i^{k}-(i-1)^{k})k^{i-1}$.
\end{cor}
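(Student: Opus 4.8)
The plan is to mirror the counting argument for minimally non-linear 0-1 matrices, introducing one additional factor to record how the two parts of the bipartite ordered graph interleave in the ambient order. First I would fix the distinguished part $U$ of size $k$ and let $i$ denote the size of the other part $V$. The preceding ratio theorem gives $i \leq 4k-2$ directly; applying the same theorem with the roles of the two parts exchanged gives $k \leq 4i-2$, i.e. $i \geq (k+2)/4$, so $i \geq \lceil (k+2)/4 \rceil$. This fixes the range of summation.

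For each fixed $i$ I would separate the data defining the graph into two independent pieces. The first is the interleaving pattern of $U$ and $V$. Since the ambient total order restricts to a fixed increasing order on each part, an ordered bipartite graph with parts of sizes $k$ and $i$ is specified by recording which $k$ of the $k+i$ positions in the total order belong to $U$, together with its adjacency structure; the interleaving thus contributes a factor of $\binom{k+i}{k}$.

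The second piece is the adjacency structure, which I would bound exactly as in the corollary for 0-1 matrices. Reading $U$ as rows in increasing order and $V$ as columns in increasing order, the edges form a 0-1 matrix with $k$ rows and $i$ columns, and the edge-removal step of the preceding lemma — for each $u \in U$, delete the edge to its smallest neighbor in $V$ — is precisely deletion of the leftmost one in each row. The same accounting then applies verbatim: there are at most $i^{k}-(i-1)^{k}$ ways to choose the smallest neighbors, since the smallest vertex of $V$ (the first column) must receive at least one such edge and so cannot be empty, and after the deletions each of the remaining $i-1$ columns carries at most one edge with at most $k$ possibilities, yielding $k^{i-1}$. Hence there are at most $(i^{k}-(i-1)^{k})k^{i-1}$ possible adjacency structures.

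Multiplying the two factors and summing over the range of $i$ produces the claimed bound $\sum_{i=\lceil (k+2)/4 \rceil}^{4k-2}\binom{k+i}{k}(i^{k}-(i-1)^{k})k^{i-1}$. The main point to get right is that the interleaving and the adjacency structure genuinely form independent coordinates of the ordered graph: the reduction of the previous lemma uses only the internal increasing order of $V$, and is therefore insensitive to how $U$ and $V$ are interleaved, so the two counts may legitimately be multiplied rather than entangled. Verifying that the edge-removal operation on the ordered graph matches leftmost-one deletion on the associated matrix, and that the "first column non-empty" constraint survives this translation, is the only place that requires care.
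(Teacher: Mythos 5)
Your proposal is correct and follows exactly the argument the paper intends: the paper states this corollary without proof, but its bound is precisely the 0-1 matrix count $(i^{k}-(i-1)^{k})k^{i-1}$ from the preceding section multiplied by the interleaving factor $\binom{k+i}{k}$, which is the decomposition you use. Your justifications of the summation range (applying the part-size theorem in both directions) and of the non-emptiness of the first column (no isolated vertices by the edgeless-vertex lemma) are the right supporting details.
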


\section{Open Problems}

We proved bounds for the following problems, but none of these problems are completely resolved.

\begin{enumerate}

\item 
\begin{enumerate}
\item For each $k > 0$, what is the maximum possible length of a minimally non-linear sequence with $k$ distinct letters? 
\item How many minimally non-linear sequences have $k$ distinct letters?
\item Characterize all minimally non-linear sequences with $k$ distinct letters.
\end{enumerate}

\item 
\begin{enumerate}
\item What is the maximum possible ratio between the length and width of a minimally non-linear 0-1 matrix? 
\item For each $k > 0$, what is the maximum possible number of columns in a minimally non-linear 0-1 matrix with $k$ rows? 
\item What is the maximum possible number of ones in a minimally non-linear 0-1 matrix with $k$ rows? 
\item How many minimally non-linear 0-1 matrices have $k$ rows?
\item Characterize all minimally non-linear 0-1 matrices with $k$ rows.
\end{enumerate}

\item
\begin{enumerate}
\item What is the maximum possible ratio between the sizes of the parts of a minimally non-linear bipartite ordered graph? 
\item For each $k > 0$, what is the maximum possible number of vertices in the second part in a minimally non-linear bipartite ordered graph with $k$ vertices in the first part? 
\item What is the maximum possible number of edges in a minimally non-linear bipartite ordered graph with $k$ total vertices? 
\item What is the maximum possible number of edges in a minimally non-linear bipartite ordered graph with $k$ vertices in one part? 
\item How many minimally non-linear bipartite ordered graphs have $k$ vertices in one part?
\item Characterize all minimally non-linear bipartite ordered graphs with $k$ vertices in one part.
\end{enumerate}

\item
\begin{enumerate}
\item For each $k > 0$, what is the maximum possible number of edges in a minimally non-linear ordered graph with $k$ vertices?
\item Characterize all minimally non-linear ordered graphs with $k$ vertices.
\end{enumerate}
\end{enumerate}

\section{Acknowledgments}
CrowdMath is an open program created by the MIT Program for Research in Math, Engineering, and Science (PRIMES) and Art of Problem Solving that gives high school and college students all over the world the opportunity to collaborate on a research project. The 2016 CrowdMath project is online at http://www.artofproblemsolving.com/polymath/mitprimes2016. The authors thank Gabor Tardos for proving that if $G'$ is an ordered graph obtained from $G$ by adding an edgeless vertex, then $\ex_{<}(n, G') = O(\ex_{<}(n, G)+n)$.

\end{document}